\documentclass[runningheads, article]{llncs}
\usepackage{amsfonts}
\usepackage{amscd}
\usepackage{amssymb}
\usepackage{amsmath}
\usepackage{bm}
\usepackage{mathrsfs}
\usepackage{mathtools}
\usepackage{here}
\usepackage{subfigure}
\usepackage{examplep}
\usepackage{mdframed}
\usepackage{graphicx}
\usepackage[T1]{fontenc}
\usepackage{color}


%
%
\begin{document}

\title{Robust portfolio optimization for recommender systems considering uncertainty of estimated statistics}

\titlerunning{Robust portfolio optimization for recommender systems}

\author{Tomoya Yanagi\inst{1}
\and
Shunnosuke Ikeda\inst{2}
\orcidID{0009-0004-4283-0819} \and \\
Yuichi Takano\inst{3}
\orcidID{0000-0002-8919-1282}
}
\authorrunning{T. Yanagi et al.}

\institute{Graduate School of Science and Technology, University of Tsukuba, \\ 1--1--1 Tennodai, Tsukuba-shi, 305--8573, Ibaraki, Japan
\email{yanagi.tomoya.ta@alumni.tsukuba.ac.jp}\\
\and
Graduate School of Science and Technology, University of Tsukuba, \\ 1--1--1 Tennodai, Tsukuba-shi, 305--8573, Ibaraki, Japan \\
\email{ikeda.shunnosuke@gmail.com}\\
\and
Institute of Systems and Information Engineering, University of Tsukuba, \\ 1--1--1 Tennodai, Tsukuba-shi, 305--8573, Ibaraki, Japan \\
\email{ytakano@sk.tsukuba.ac.jp}}
\maketitle 
\begin{abstract}
This paper is concerned with portfolio optimization models for creating high-quality lists of recommended items to balance the accuracy and diversity of recommendations.
However, the statistics (i.e., expectation and covariance of ratings) required for mean--variance portfolio optimization are subject to inevitable estimation errors.
To remedy this situation, we focus on robust optimization techniques that derive reliable solutions to uncertain optimization problems. 
Specifically, we propose a robust portfolio optimization model that copes with the uncertainty of estimated statistics based on the cardinality-based uncertainty sets. 
This robust portfolio optimization model can be reduced to a mixed-integer linear optimization problem, which can be solved exactly using mathematical optimization solvers. 
Experimental results using two publicly available rating datasets demonstrate that our method can improve not only the recommendation accuracy but also the diversity of recommendations compared with conventional mean--variance portfolio optimization models. 
Notably, our method has the potential to improve the recommendation quality of various rating prediction algorithms.

\keywords{Recommender system \and Portfolio optimization \and Robust optimization \and Diversity.}
\end{abstract}

\section{Introduction}\label{sec:intro}
\subsection{Background}\label{subsec:background}
Recent advances in information and communication technologies have enabled consumers to browse and purchase a wide variety of items through online services.
Meanwhile, consumers often struggle to find items that match their preferences among a plethora of options. 
To overcome this information overload problem, recommender systems~\cite{aggarwal2016recommender} 
have been widely implemented in online services. 
These systems generate personalized lists of unknown but potentially preferred items.
In fact, integrating these systems into online platforms simplifies the search process and improves user experience~\cite{lu2015recommender}. 

Much of the research to date has focused on improving the prediction accuracy of recommender systems~\cite{bobadilla2013recommender,iwanaga2019improving}.
To this end, a variety of rating prediction algorithms have been proposed, ranging from traditional methods such as collaborative filtering~\cite{su2009survey} and matrix factorization~\cite{koren2009matrix} to recent deep learning techniques~\cite{gao2022graph,zhang2019deep}. 

In contrast, it has been pointed out that the most accurate recommendations do not always satisfy users~\cite{mcnee2006being}. 
Specifically, favorable recommendations must take into account the diversity and novelty of items provided for users~\cite{castells2021novelty,kaminskas2016diversity}.
However, most recommender systems tend to offer popular items, leading to reduced diversity.
Increasing the diversity and novelty of recommendations remains a major challenge in the advancement of recommender systems.

\subsection{Related work}\label{subsec:relatedwork}

Prior studies have developed various algorithms to improve recommendation diversity~\cite{castells2021novelty,kunaver2017diversity,moller2020not}. 
In particular, we explore applying financial portfolio theory~\cite{elton2009modern} to recommender systems, aiming to balance the accuracy and diversity of recommendations.

Mean--variance portfolio analysis has been applied to ranking systems in information retrieval and collaborative filtering~\cite{wang2009mean,wang2009portfolio}.
Shi et al.~\cite{shi2012adaptive} improved this ranking strategy by using the latent factor model for user ratings. 
Kwon~\cite{kwon2008improving} focused on the variance of ratings to increase the precision of top-$N$ recommendations. 
Zhang and Hurley~\cite{zhang2008avoiding} and Hurley and Zhang~\cite{hurley2011novelty} formulated several optimization models for selecting a list of diverse items. 
Xiao et al.~\cite{xiao2020recrisk} used a portfolio optimization model to recommend relevant services based on service risk facets.
Yasumoto and Takano~\cite{yasumotomean} developed shrinkage estimation methods to improve the accuracy of estimating the rating covariance matrix. 

These studies have shown that the quality of recommendation lists can be improved by applying portfolio optimization. 
However, the statistics (i.e., expectation and covariance of ratings) required for mean--variance portfolio optimization are subject to inevitable estimation errors. 
In fact, it has been pointed out that the mean--variance portfolio optimization model often performs badly because of such estimation errors~\cite{broadie1993computing}. 

To remedy this situation, we focus on robust optimization techniques~\cite{bertsimas2011theory,kim2018recent} that derive reliable solutions to uncertain optimization problems. 
To the best of our knowledge, none of the prior studies have considered recommender systems that incorporate uncertainty of estimated statistics to select high-quality lists of recommendations.

\subsection{Our contribution}\label{subsec:contribution}
The goal of this paper is to improve the recommendation quality of portfolio optimization models by using robust optimization techniques. 
Specifically, we propose a robust portfolio optimization model that copes with the uncertainty of estimated statistics based on the cardinality-based uncertainty sets~\cite{bertsimas2004price}. 
Our robust portfolio optimization model can be reduced to a mixed-integer linear optimization problem, which can be solved exactly using mathematical optimization solvers. 

To verify the effectiveness of our method for recommendation, we conducted numerical experiments using two publicly available rating datasets. 
Experimental results demonstrate that our method can improve not only the recommendation accuracy but also the diversity of recommendations compared with conventional mean--variance portfolio optimization models. 
This emphasizes that by properly handling the uncertainty in the expectation and covariance of user ratings, we can generate high-quality lists of recommendations that fulfill users' diverse requirements. 
Moreover, our method has the potential to improve the
recommendation quality of various rating prediction algorithms. 

\section{Mean--variance portfolio optimization model}\label{sec:meanval}
In this section, we outline the mean--variance portfolio optimization model~\cite{yasumotomean}, which is a combination of mean--variance portfolio analysis~\cite{wang2009mean,wang2009portfolio} and the cardinality-constrained optimization model~\cite{hurley2011novelty,zhang2008avoiding} for recommendation. 

Let $U$ and $I$ denote the sets of users and items, respectively. 
The rating matrix is then defined as
\[
    \bm{R} \coloneqq (r_{ui})_{(u,i) \in U \times I} \in \mathbb{R}^{\lvert U \rvert \times \lvert I \rvert},
\]
where $r_{ui}$ denotes the rating indicated by user $u \in U$ to item $i \in I$. 
Note that this matrix contains many missing entries because each user rates only a small subset of items. 
We denote by $Q \subseteq U \times I$ the subset of user--item pairs for which ratings have been observed. 
Recommender systems involve predicting unknown ratings for user--item pairs $(u,i) \not\in Q$ and then suggesting items that are likely to be favored by each user.

Let $\bm{x} \coloneqq (x_{i})_{i \in I} \in \{0,1\}^{\lvert I \rvert}$ be a vector composed of binary design variables for selecting items to be recommended, where $x_{i} = 1$ indicates that item $i \in I$ is recommended for a target user. 
We define $\hat{{\mu}}_{ui}$ as the expected rating of user $u \in U$ for item $i \in I$; this value can be estimated using rating prediction algorithms (e.g., collaborative filtering \cite{su2009survey}, matrix factorization \cite{koren2009matrix}, and deep learning techniques~\cite{gao2022graph,zhang2019deep}). 
We also define $\hat{\sigma}_{ij}$ as the rating covariance for a pair $(i,j) \in I \times I$ of items; this value can be calculated from users' ratings observed for both items $i,j \in I$. 
Shrinkage estimation methods were also developed to improve the accuracy of estimating the rating covariance matrix~\cite{yasumotomean}. 

For each user $u \in U$, we denote by $R_u(\bm{x})$ a total rating function, which is the sum of ratings for recommended items. 
The expectation and variance of the total rating can be estimated as follows:
\begin{equation}
    \text{E}[R_u(\bm{x})] \approx \sum_{i \in I} \hat{{\mu}}_{ui} x_{i}, \quad 
    \text{Var}[R_u(\bm{x})] \approx \sum_{i \in I}\sum_{j \in I} \hat{\sigma}_{ij} x_{i} x_{j}. \label{eq:meanVar}
\end{equation}

Let $I_u \subseteq \{i \in I \mid (u,i) \notin Q\}$ be a set of candidate items to be recommended for user $u \in U$. 
For each target user $u \in U$, the mean--variance portfolio optimization model for selecting $N$ recommended items is formulated as the following binary optimization problem: 
\begin{align} 
    \underset{\bm{x}}{\text{minimize}} \quad & \alpha \sum_{i \in I_u} \sum_{j \in I_u} \hat{\sigma}_{i j} x_{i} x_{j} - (1 -\alpha) \sum_{i \in I_u} \hat{\mu}_{ui} x_{i} \label{eq:portfolio_obj} \\ 
    \text {subject to} \quad & \sum_{i \in I_u} x_{i} = N, \label{eq:portfolio_N} \\
    & x_{i} \in\{0,1\} \quad\left(i \in I_u\right), \label{eq:portfolio_x}
\end{align}
where $\alpha \in [0,1]$ is a hyperparameter for adjusting the mean--variance trade-off. 
The objective function (Eq.~\eqref{eq:portfolio_obj}) is the weighted sum of two objectives (Eq.~\eqref{eq:meanVar}).
Eq.~\eqref{eq:portfolio_N} specifies the number of recommended items, and Eq.~\eqref{eq:portfolio_x} imposes the binary constraint on each entry of $\bm{x}$ for selecting recommendations.

\section{Robust portfolio optimization model}\label{sec:robust-opt}
In this section, we propose a robust portfolio optimization model for recommendation by incorporating the cardinality-based uncertainty sets~\cite{bertsimas2004price} into the mean--variance model~(Eqs.~\eqref{eq:portfolio_obj}--\eqref{eq:portfolio_x}). 

Specifically, for each user $u \in U$ we consider the following ranges of variation in the expectation and covariance of ratings:
\begin{align}
\mu_{ui} & \in [\hat{\mu}_{ui} - \delta_{ui}^{(\mu)},~\hat{\mu}_{ui} + \delta_{ui}^{(\mu)}] \quad (i \in I), \notag \\
\sigma_{ij} & \in [\hat{\sigma}_{ij} - \delta_{ij}^{(\sigma)},~\hat{\sigma}_{ij} + \delta_{ij}^{(\sigma)}] \quad ((i,j) \in I \times I), \notag
\end{align}
where $\delta_{ui}^{(\mu)} \ge 0$ and $\delta_{ij}^{(\sigma)} \ge 0$ are the magnitudes of variation in the expectation and covariance, respectively. 

Let $S^{(\mu)} \subseteq I$ and $S^{(\sigma)} \subseteq I \times I$ denote subsets of items and item pairs that are accompanied by variation in the expectation and covariance, respectively. 
For each target user $u \in U$, we consider minimizing the objective function (Eq.~\eqref{eq:portfolio_obj}) in the worst case as follows: 
\begin{align} 
    \underset{\bm{x}}{\text{minimize}} \quad & \alpha \left( \sum_{i \in I_u} \sum_{j \in I_u} \hat{\sigma}_{ij} x_{i} x_{j} + \max_{\substack{S^{(\sigma)} \subseteq I_u \times I_u \\ \lvert S^{(\sigma)} \rvert \le \Gamma^{(\sigma)}}} \left\{ \sum_{(i,j) \in S^{(\sigma)}} \hspace*{-4mm} \delta_{ij}^{(\sigma)} x_{i} x_{j} \right\}\right) \notag \\
    & - (1 -\alpha) \left( \sum_{i \in I_u} \hat{\mu}_{ui} x_{i} - \max_{\substack{S^{(\mu)} \subseteq I_u \\ \lvert S^{(\mu)} \rvert \le \Gamma^{(\mu)}}} \left\{\sum_{i \in S^{(\mu)}} \delta_{ui}^{(\mu)} x_{i} \right\} \right) \label{obj:robust1} \\ 
    \text {subject to} \quad 
    & \mbox{Eqs.~\eqref{eq:portfolio_N} and \eqref{eq:portfolio_x}}, \label{con1:robust1}
\end{align}
where $\Gamma^{(\mu)}, \Gamma^{(\sigma)} \in \mathbb{Z}_{+}$ are integer-valued parameters for limiting the cardinalities of $S^{(\mu)}$ and $S^{(\sigma)}$. 

We are now in a position to derive our formulation of the robust portfolio optimization model for recommendation. 

\begin{theorem} \label{thm:robust_portfolio} \rm
    For each target user $u \in U$, problem (Eqs.~\eqref{obj:robust1}--\eqref{con1:robust1}) can be reformulated as the following mixed-integer optimization problem: 
\begin{align}
    \underset{\bm{p}, \bm{q}, \bm{x}, y, z}{\mathrm{minimize}} \quad 
    & \alpha \left(\sum_{i \in I_u} \sum_{j \in I_u} \hat{\sigma}_{ij} x_{i} x_{j} + z \Gamma^{(\sigma)} + \sum_{i \in I_u} \sum_{j \in I_u} q_{ij} \right) \notag \\
    & - (1-\alpha) \left(\sum_{i \in I_u} \hat{\mu}_{ui} x_{i} - y \Gamma^{(\mu)} - \sum_{i \in I_u} p_{i} \right) \label{obj:robust2}\\  
    \mathrm{subject~to} \quad
    & \mathrm{Eqs}.~\eqref{eq:portfolio_N}~\mathrm{and}~\eqref{eq:portfolio_x}, \label{con1:robust2}\\ 
    & \delta^{(\mu)}_{ui} x_{i} \leq y + p_{i}, \quad p_{i} \geq 0 \quad (i \in I_u), \label{con2:robust2}\\
    & \delta^{(\sigma)}_{ij} x_{i} x_{j} \leq z + q_{ij}, \quad q_{ij} \geq 0 \quad ((i,j) \in I_u \times I_u), \label{con3:robust2}\\
    & y \geq 0, \quad z \geq 0, \label{con4:robust2}
\end{align}
where $\bm{p} \coloneqq (p_i)_{i \in I} \in \mathbb{R}^{\lvert I \rvert}$, $\bm{q} \coloneqq (q_{ij})_{(i,j) \in I \times I} \in \mathbb{R}^{\lvert I \times I \rvert}$, $y \in \mathbb{R}$, and $z \in \mathbb{R}$ are dual design variables. 
\end{theorem}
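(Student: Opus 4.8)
The plan is to handle the two inner maximizations in the objective (Eq.~\eqref{obj:robust1}) separately, rewrite each as a linear program over a continuous relaxation of the subset-selection indicators, and then pass to the LP dual so that each inner $\max$ becomes an inner $\min$ that can be merged with the outer minimization over $\bm{x}$; this is the standard reformulation route for cardinality-based uncertainty sets from \cite{bertsimas2004price}.

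First I would fix $\bm{x}$ and look at the expectation term $\max\bigl\{\sum_{i \in S^{(\mu)}} \delta_{ui}^{(\mu)} x_i : S^{(\mu)} \subseteq I_u,\ \lvert S^{(\mu)} \rvert \le \Gamma^{(\mu)}\bigr\}$. Introducing a membership indicator $\xi_i \in \{0,1\}$ for $S^{(\mu)}$, this equals $\max\bigl\{\sum_{i \in I_u} \delta_{ui}^{(\mu)} x_i \xi_i : \sum_{i \in I_u} \xi_i \le \Gamma^{(\mu)},\ \xi_i \in \{0,1\}\bigr\}$. Because every coefficient $\delta_{ui}^{(\mu)} x_i$ is nonnegative and the only non-box constraint is a single unit-weight knapsack (cardinality) constraint, a greedy solution that picks the $\Gamma^{(\mu)}$ largest coefficients is optimal even for the continuous relaxation $0 \le \xi_i \le 1$; hence the subset maximum equals the optimal value of that LP. The identical argument applies to the covariance term, with the nonnegative quantities $\delta_{ij}^{(\sigma)} x_i x_j$ as coefficients and $\Gamma^{(\sigma)}$ as the budget — note that here $x_i x_j$ is merely a constant inside the inner problem.

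Next I would dualize. Assigning the dual variable $y \ge 0$ to $\sum_i \xi_i \le \Gamma^{(\mu)}$ and $p_i \ge 0$ to $\xi_i \le 1$, the dual of the expectation LP is $\min\bigl\{\Gamma^{(\mu)} y + \sum_{i \in I_u} p_i : y + p_i \ge \delta_{ui}^{(\mu)} x_i\ (i \in I_u),\ y \ge 0,\ \bm{p} \ge \bm{0}\bigr\}$; since the primal is feasible ($\bm{\xi} = \bm{0}$) and bounded, LP strong duality gives equality of the optimal values. Analogously the covariance LP dualizes to $\min\bigl\{\Gamma^{(\sigma)} z + \sum_{(i,j) \in I_u \times I_u} q_{ij} : z + q_{ij} \ge \delta_{ij}^{(\sigma)} x_i x_j,\ z \ge 0,\ \bm{q} \ge \bm{0}\bigr\}$.

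Finally I would substitute these dual forms back into Eq.~\eqref{obj:robust1}. Since $\alpha \ge 0$ and $1-\alpha \ge 0$, and the covariance max enters with a $+$ sign while the $-\max$ inside the $-(1-\alpha)(\cdot)$ term contributes $+(1-\alpha)\max$, both inner optima appear with nonnegative weight in the quantity being minimized over $\bm{x}$; therefore replacing each $\max$ by its equal-valued $\min$ and treating $\bm{p},\bm{q},y,z$ as additional decision variables collapses the min–max into the single joint minimization of Eqs.~\eqref{obj:robust2}--\eqref{con4:robust2}. The one genuine step is the integrality claim for the subset-selection relaxations; the rest is sign bookkeeping and strong LP duality. (The products $x_i x_j$ ride through the dualization untouched and remain in the final model, which is thus mixed-integer though not yet linear — linearizing them is a separate, routine step.)
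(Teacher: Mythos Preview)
Your proposal is correct and follows essentially the same approach as the paper: rewrite each inner subset-maximization as a linear program in continuous $[0,1]$ indicators, apply LP duality, and merge the resulting inner minimizations with the outer one. If anything, you supply more justification than the paper does---you explicitly argue why the continuous relaxation of the cardinality-constrained max attains the integer optimum and why the inner $\min$'s can be absorbed into the outer $\min$ via the sign structure, whereas the paper simply states the LP rewriting and invokes duality.
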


\begin{proof}
The latter inner maximization problem in Eq.~\eqref{obj:robust1} can be rewritten as 
\begin{align}
    \underset{\bm{\gamma}^{(\mu)}}{\text{maximize}} \quad & \sum_{i \in I_u} \delta_{ui}^{(\mu)} x_i \gamma_i^{(\mu)} \label{obj:inner_min1} \\
    \text{subject to} \quad 
    & \sum_{i \in I_u} \gamma_i^{(\mu)} \le \Gamma^{(\mu)}, \label{con1:inner_min1} \\
    & 0 \le \gamma_i^{(\mu)} \le 1 \quad (i \in I_u), \label{con2:inner_min1}
\end{align}
where $\bm{\gamma}^{(\mu)} \coloneqq (\gamma_{i}^{(\mu)})_{i \in I} \in \mathbb{R}^{\lvert I \rvert}$ is an auxiliary design variable serving as the subset $S^{(\mu)}$. 
Similarly, the former inner maximization problem in Eq.~\eqref{obj:robust1} can be rewritten as 
\begin{align}
    \underset{\bm{\gamma}^{(\sigma)}}{\text{maximize}} \quad & \sum_{i \in I_u} \sum_{j \in I_u} \delta_{ij}^{(\sigma)} x_{i} x_{j} \gamma_{ij}^{(\sigma)} \label{obj:inner_min2} \\
    \text{subject to} \quad 
    & \sum_{i \in I_u} \sum_{j \in I_u} \gamma_{ij}^{(\sigma)} \le \Gamma^{(\sigma)}, \label{con1:inner_min2} \\
    & 0 \le \gamma_{ij}^{(\sigma)} \le 1 \quad ((i,j) \in I_u \times I_u), \label{con2:inner_min2}
\end{align}
where $\bm{\gamma}^{(\sigma)} \coloneqq (\gamma_{ij}^{(\sigma)})_{(i,j) \in I \times I} \in \mathbb{R}^{\lvert I \times I \rvert}$ is an auxiliary design variable serving as the subset $S^{(\sigma)}$. 

The proof is then completed by transforming these maximization problems (Eqs.~\eqref{obj:inner_min1}--\eqref{con2:inner_min1} and Eqs.~\eqref{obj:inner_min2}--\eqref{con2:inner_min2}) into dual minimization problems, where $y$, $\bm{p}$, $z$, and $\bm{q}$ are dual design variables corresponding to Eqs.~\eqref{con1:inner_min1}, \eqref{con2:inner_min1}, \eqref{con1:inner_min2}, and \eqref{con2:inner_min2}, respectively. \hfill $\square$
\end{proof}

Although there are bilinear terms (i.e., $x_i x_j$) of binary design variables in Eqs.~\eqref{obj:robust2} and \eqref{con3:robust2}, they can be linearized using well-known reformulation techniques~\cite{williams2013model}. 
As a result, problem (Eqs.~\eqref{obj:robust2}--\eqref{con4:robust2}) can be reduced to a mixed-integer linear optimization problem.
Note also that Gurobi, which is the commercial optimization solver used in our experiments, is capable of directly solving problem~(Eqs.~\eqref{obj:robust2}--\eqref{con4:robust2}).

\section{Experiments}\label{sec:experiment}
In this section, we evaluate the effectiveness of our robust portfolio optimization method for recommendation through numerical experiments.
All experiments were performed on a Mac OS 14.4 computer equipped with an Apple M3 processor @ 4.05 GHz (8 cores) and 24 GB RAM. 

\subsection{Datasets}\label{subsec:datasets}
We used two publicly available datasets of user ratings, the MovieLens 100K\footnote[1]{\url{https://grouplens.org/datasets/movielens/100k/}} and Yahoo! R3\footnote[2]{\url{https://webscope.sandbox.yahoo.com/}} (Yahoo! music ratings for user selected and randomly selected songs, v.~1.0) datasets, commonly employed to evaluate recommender systems.

The MovieLens 100K dataset contains 100,000 ratings by 943 users on 1,682 movies on a scale of 1 to 5. 
Since some users provided only a few ratings, we selected 568 users who rated at least 50 movies. 
Each user's ratings were randomly split into training (60\%) and testing (40\%) sets, and the results were averaged over five random splits. 

The Yahoo! R3 dataset is pre-divided into training and testing sets, which contain 15,400 and 5,400 users, respectively. 
Each user rated some of the 1,000 songs on a scale of 1 to 5, and the testing set contained 10 ratings per user.
We extracted ratings from the 5,400 users included in both the training and testing sets.

\subsection{Experimental setup}\label{subsec:setup}
For rating prediction, we implemented the singular value decomposition method using the Python \texttt{Surprise} library\footnote[3]{\url{http://surpriselib.com/}}~\cite{hug2020surprise}. 
Here, we set the number of factors to 100, the learning rate to 0.01, and the regularization weight to 0.1.
The root mean square error (RMSE) for rating prediction was 0.92 and 1.42  on the MovieLens 100K and Yahoo! R3 testing datasets, respectively. 
This indicates the prediction accuracy was lower for the Yahoo! R3 dataset than for the MovieLens 100K dataset.

For each user $u \in U$, the candidate item set $I_{u}$ was composed of the items rated by the user in the corresponding testing set. 
The number of recommended items was set to $N=10$ for the MovieLens 100K dataset and $N=5$ for the Yahoo! R3 dataset.
We set $\alpha = 0.2$ as the hyperparameter for adjusting the mean--variance trade-off. 
We performed shrinkage estimation of rating covariance matrices as 
\[
\hat{\bm{\Sigma}} \coloneqq (\hat{\sigma}_{ij})_{(i,j) \in I \times I} = 0.5\bm{S} + 0.5\bm{F},
\]
where $\bm{S}$ is the sample covariance matrix, and $\bm{F}$ is the matrix-completion-based target matrix; see Yasumoto and Takano~\cite{yasumotomean} for details. 

We solved the portfolio optimization problem~(Eqs.~\eqref{obj:robust2}--\eqref{con4:robust2}) using the Gurobi Optimizer\footnote[4]{\url{https://www.gurobi.com/}}, with a maximum computation time of 3 seconds. 
We set the ranges of variation in the expectation and covariance of ratings as
\begin{align}
& \delta_{ui}^{(\mu)} = \frac{\sqrt{\hat{\sigma}_{ii}}}{\sqrt{\max\{1, n_i\}}} \quad ((u,i) \in U \times I), \quad 
\delta_{ij}^{(\sigma)} = \frac{0.2 \cdot \hat{\sigma}_{ij}}{\sqrt{\max\{1,n_{ij}\}}} \quad ((i,j) \in I \times I), \notag
\end{align}
where $n_i$ is the number of users who rated item $i \in I$, and $n_{ij}$ is the number of users who rated both items $i,j \in I$.

\subsection{Evaluation Metrics}

To evaluate both the accuracy and diversity of recommendations, we used the following evaluation metrics~\cite{shani2011evaluating}: 
\begin{itemize}
\item \textbf{F1 score}: average F1 score of recommendations across all users;  
\item \textbf{Gini coefficient}: Gini coefficient of the number of times each item was recommended. 
\end{itemize} 

The F1 score was employed to measure the recommendation accuracy based on the intersection of recommended and highly rated items. 
Here, highly rated items were defined as items with a rating of 4 or higher in the MovieLens 100K dataset and a rating of 3 or higher in the Yahoo! R3 dataset. 
The higher the F1 score, the more accurate the recommendations.

The Gini coefficient was employed to measure the inequality in the number of recommendations among items. 
The lower the Gini coefficient, the more diverse the recommendations.

\begin{figure}[t]
    \centering
    \subfigure[F1 score vs. $\Gamma^{(\mu)}$]{\includegraphics[scale=0.15]{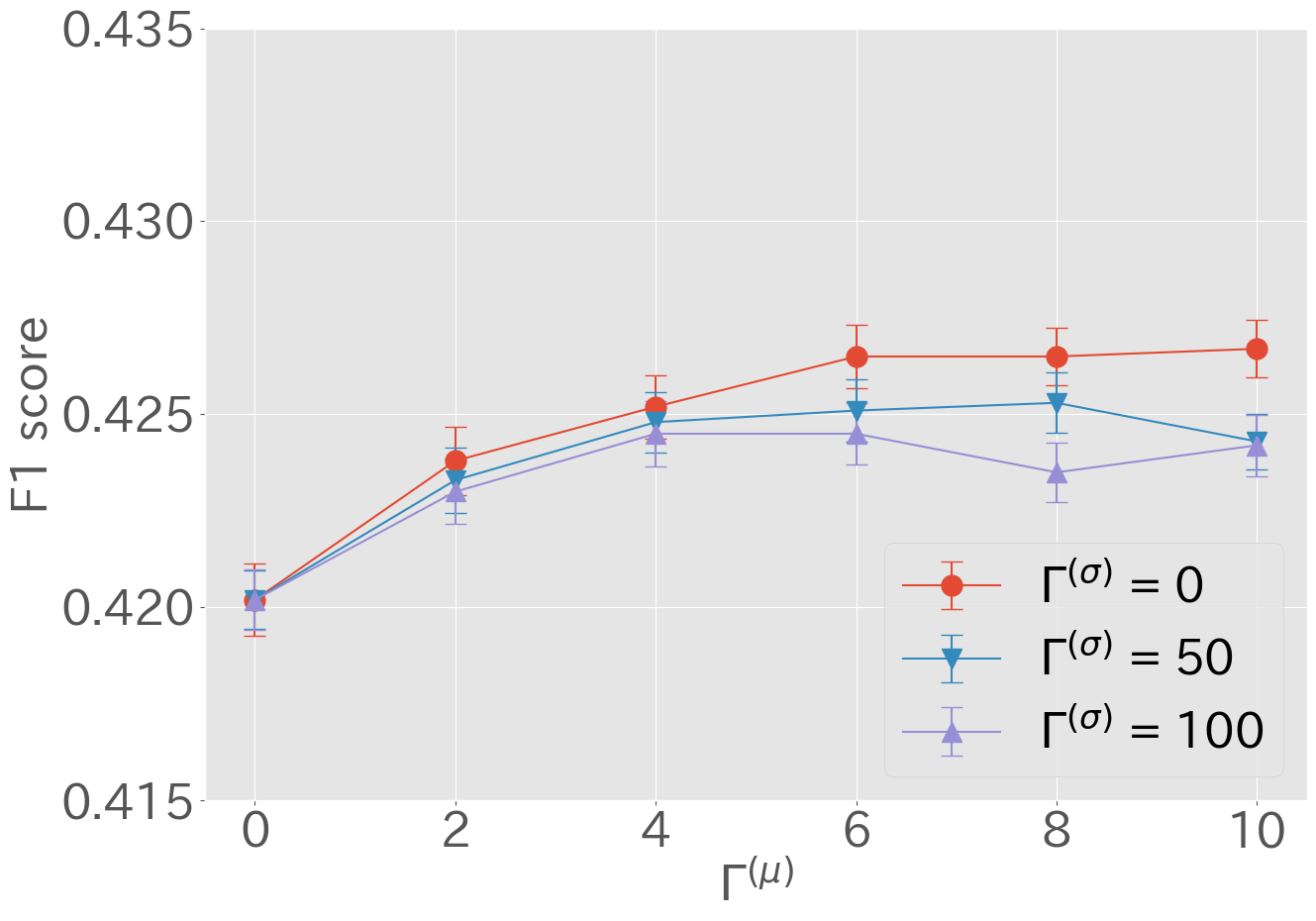}\label{fig:result_movie_f1_mu}}
    \subfigure[F1 score vs. $\Gamma^{(\sigma)}$]{\includegraphics[scale=0.15]{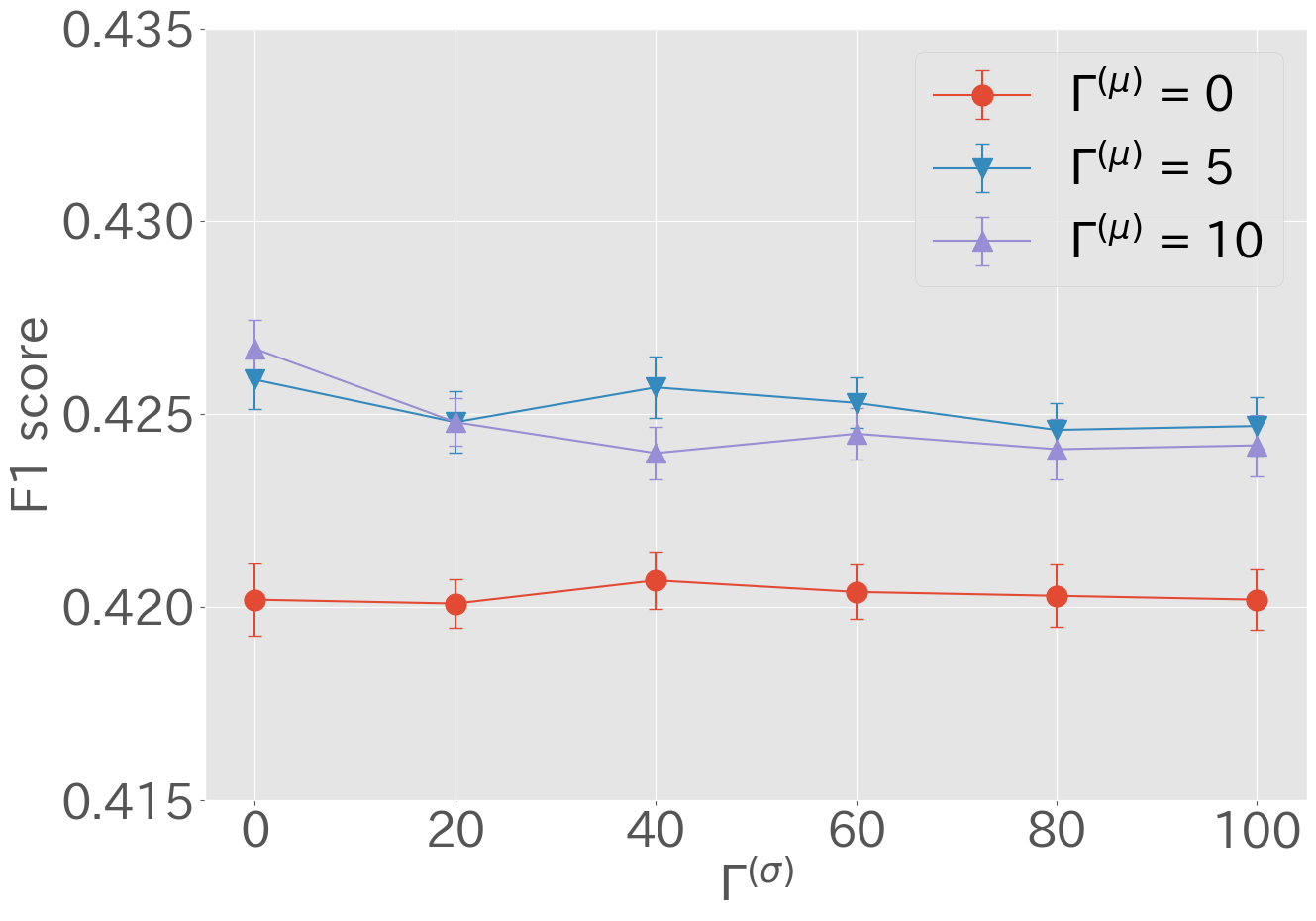}\label{fig:result_movie_f1_sigma}}
    \subfigure[Gini coefficient vs. $\Gamma^{(\mu)}$]{\includegraphics[scale=0.15]{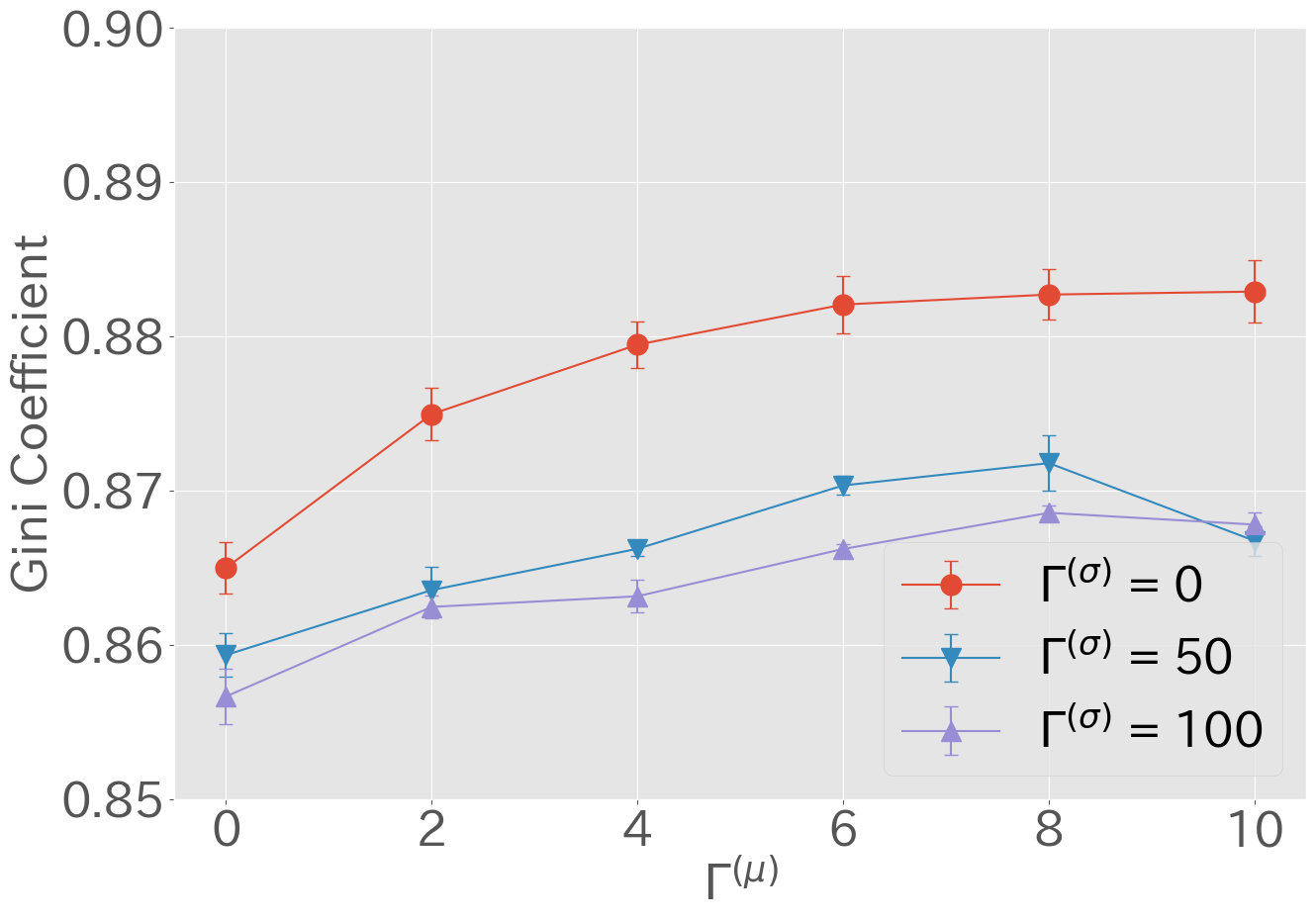}\label{fig:result_movie_gini_mu}}
    \subfigure[Gini coefficient vs. $\Gamma^{(\sigma)}$]{\includegraphics[scale=0.15]{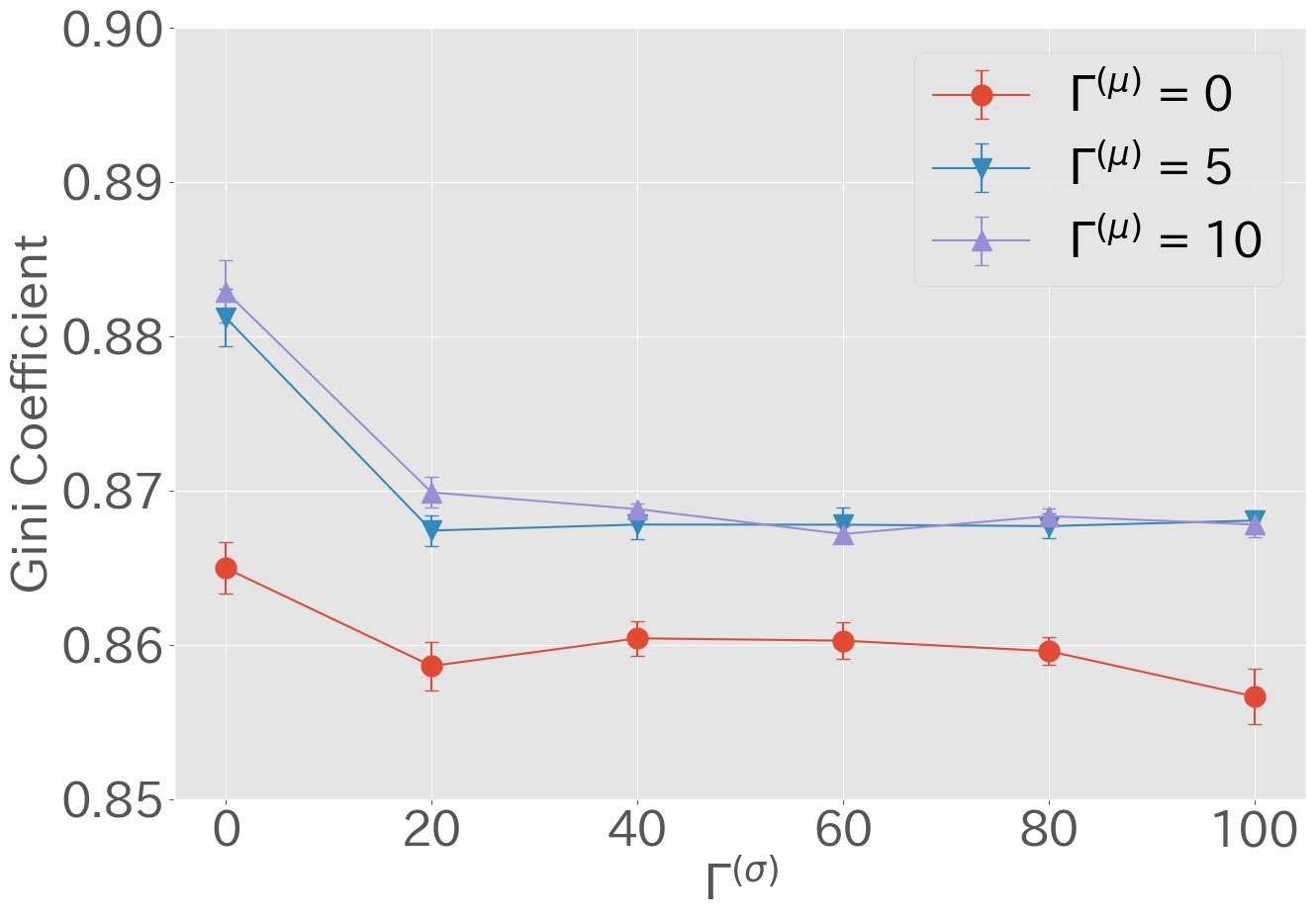}\label{fig:result_movie_gini_sigma}}
    \caption{Results for the MovieLens 100K dataset ($N=10$)}
    \label{fig:result_movie}
\end{figure}

\subsection{Results for the MovieLens 100K dataset}\label{subsec:results_mv}

Fig.~\ref{fig:result_movie} shows the F1 scores and Gini coefficients of recommendations ($N=10$) with various values of $\Gamma^{(\mu)}$ and $\Gamma^{(\sigma)}$ on the MovieLens 100K dataset.  
Recall that these results were averaged over five repetitions, with standard errors displayed as error bars in the figures. 

We first focus on the effect of the cardinality parameter $\Gamma^{(\mu)}$ of variation in the expected ratings. 
The F1 score improved as $\Gamma^{(\mu)}$ increased (Fig.~\ref{fig:result_movie_f1_mu}), which suggests that considering the uncertainty in expected ratings can upgrade the accuracy of recommendations.
Conversely, the Gini coefficient increased as $\Gamma^{(\mu)}$ increased (Fig.~\ref{fig:result_movie_gini_mu}), which implies that considering the uncertainty in expected ratings can decrease the diversity of recommendations among items.
This is likely because popular items are preferentially recommended to a wide range of users to avoid uncertainty in expected ratings.

We next move on to the effect of the cardinality parameter $\Gamma^{(\sigma)}$ of variation in the rating covariance. 
Increasing $\Gamma^{(\sigma)}$ had relatively small impacts on the F1 score (Fig.~\ref{fig:result_movie_f1_sigma}). 
In contrast, increasing $\Gamma^{(\sigma)}$ led to a reduction in the Gini coefficient (Fig.~\ref{fig:result_movie_gini_sigma}), which suggests that considering the uncertainty in rating covariance can enhance the diversity of recommendations among items.
This effect is due to the recommendation of item pairs with lower covariance, resulting in more diverse sets of recommended items.

These results on the MovieLens 100K dataset (Fig.~\ref{fig:result_movie}) demonstrate that our robust portfolio optimization method can significantly improve the accuracy or diversity of recommendations by appropriately tuning the cardinality parameters $\Gamma^{(\mu)}$ and $\Gamma^{(\sigma)}$, compared to not considering the uncertainty in the expectation or covariance of ratings (i.e., $\Gamma^{(\mu)}=\Gamma^{(\sigma)}=0$).

\begin{figure}[t]
    \centering
    \subfigure[F1 score vs. $\Gamma^{(\mu)}$]{\includegraphics[scale=0.15]{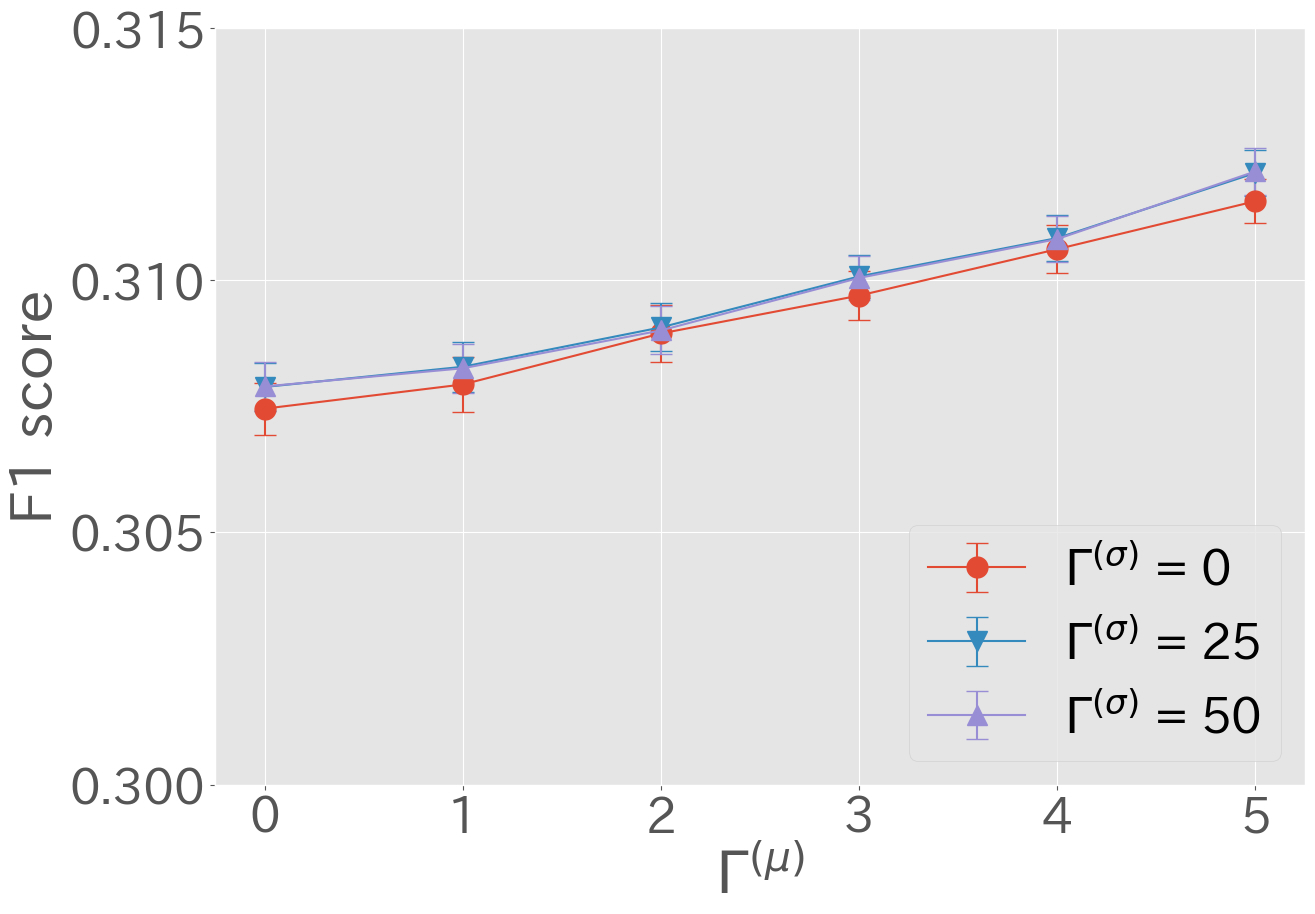}\label{fig:result_r3_f1_mu}}
    \subfigure[F1 score vs. $\Gamma^{(\sigma)}$]{\includegraphics[scale=0.15]{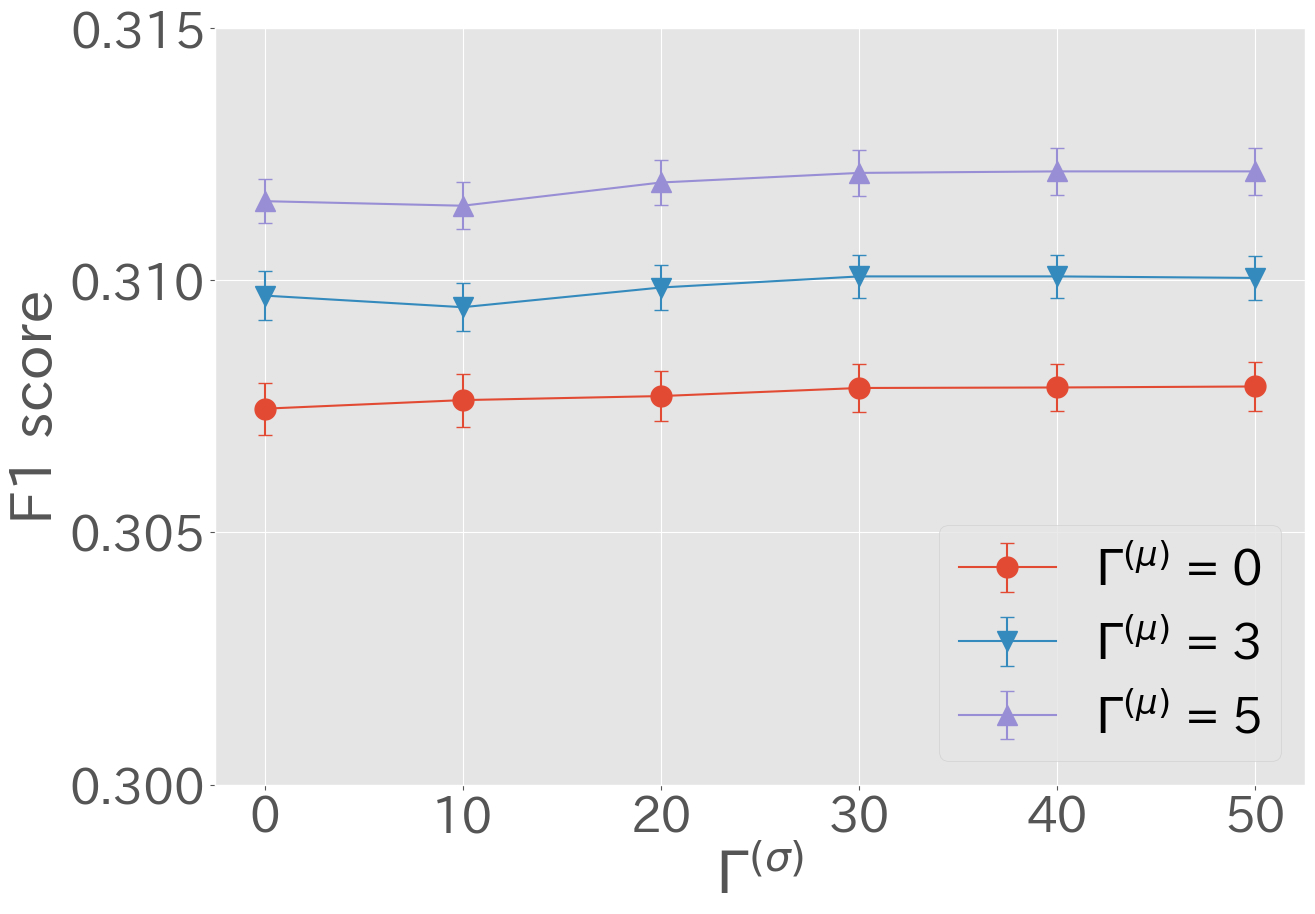}\label{fig:result_r3_f1_sigma}}
    \subfigure[Gini coefficient vs. $\Gamma^{(\mu)}$]{\includegraphics[scale=0.15]{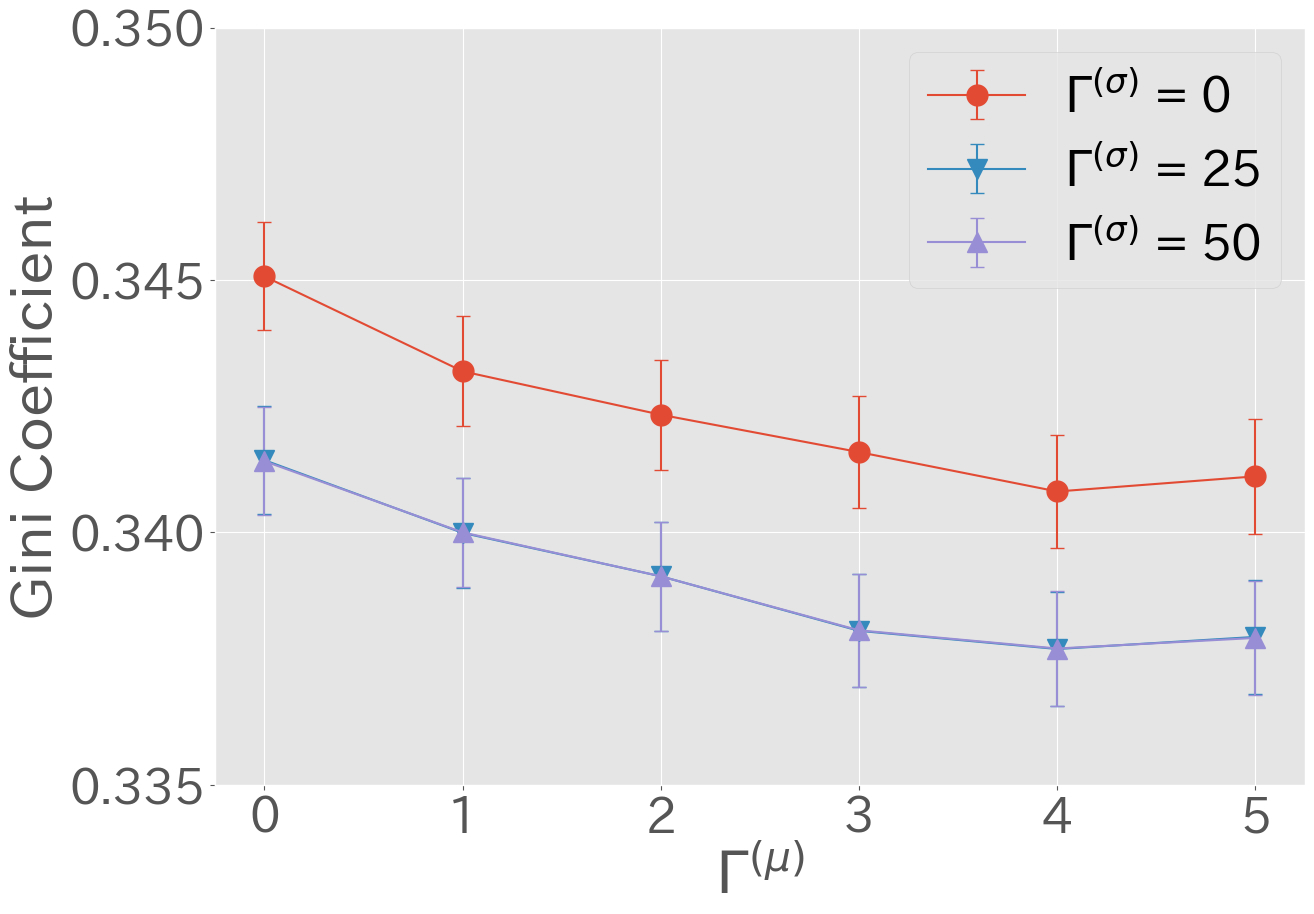}\label{fig:result_r3_gini_mu}}
    \subfigure[Gini coefficient vs. $\Gamma^{(\sigma)}$]{\includegraphics[scale=0.15]{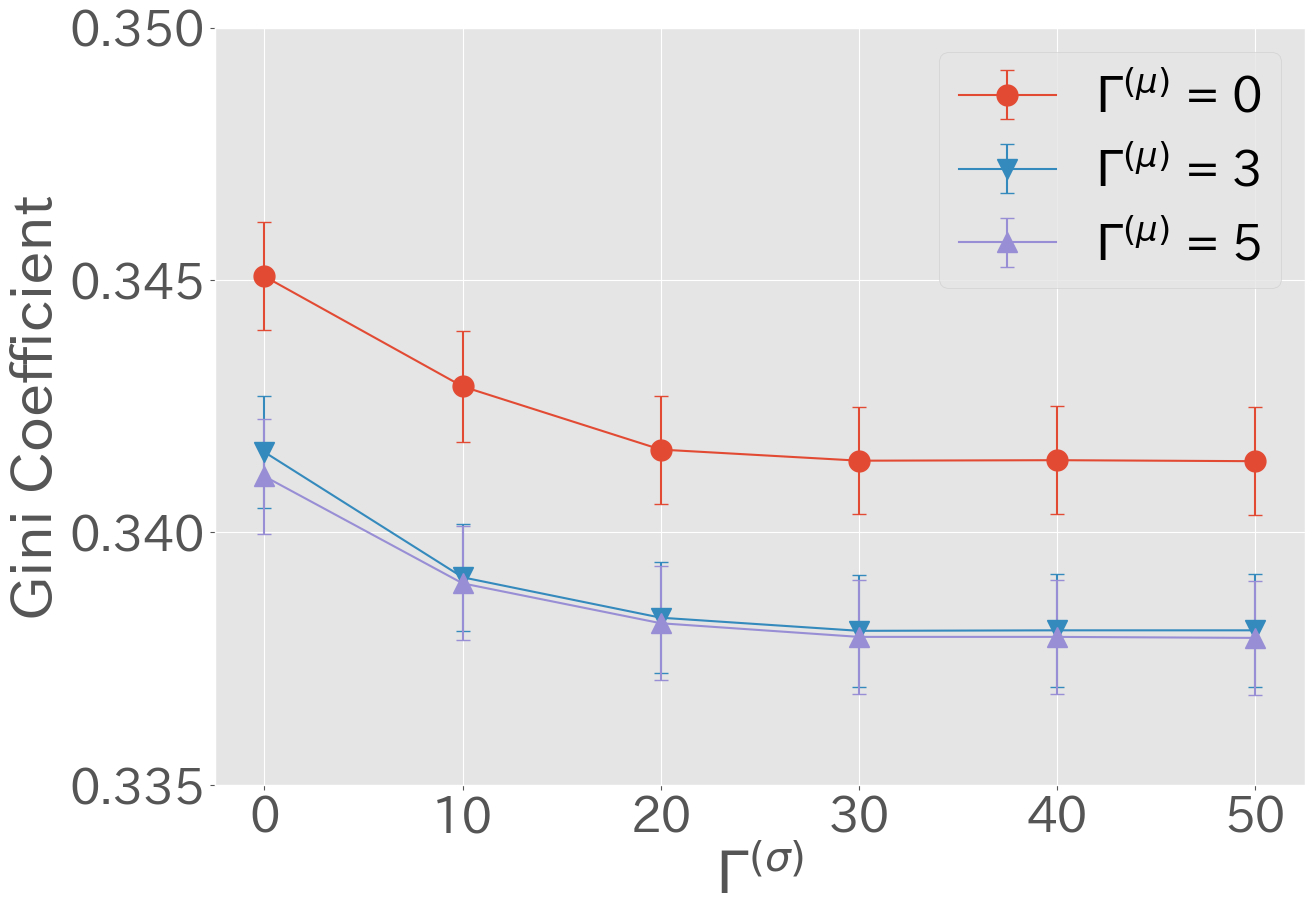}\label{fig:result_r3_gini_sigma}}
    \caption{Results for the Yahoo! R3 dataset ($N=5$)}
    \label{fig:result_r3}
\end{figure}

\subsection{Results for the Yahoo! R3 dataset}\label{subsec:results_r3}

Fig.~\ref{fig:result_r3} shows the F1 scores and Gini coefficients of recommendations ($N=5$) with various values of $\Gamma^{(\mu)}$ and $\Gamma^{(\sigma)}$ on the Yahoo! R3 dataset. 
Note that the error bars displayed in the figures represent standard errors estimated from 10 bootstrap runs.

We first focus on the effect of the cardinality parameter $\Gamma^{(\mu)}$ of variation in the expected ratings. 
The F1 score improved as $\Gamma^{(\mu)}$ increased (Fig.~\ref{fig:result_r3_f1_mu}) as well as the MovieLens 100K dataset (Fig.~\ref{fig:result_movie_f1_mu}). 
Moreover, in contrast to Fig.~\ref{fig:result_movie_gini_mu} on the MovieLens 100K dataset, the Gini coefficient decreased as $\Gamma^{(\mu)}$ increased (Fig.~\ref{fig:result_r3_gini_mu}). 
This indicates that considering the uncertainty in expected ratings can significantly improve both the accuracy and diversity of recommendations.

We next move on to the effect of the cardinality parameter $\Gamma^{(\sigma)}$ of variation in the rating covariance. 
Increasing $\Gamma^{(\sigma)}$ had relatively small impacts on the F1 score (Fig.~\ref{fig:result_r3_f1_sigma}) as well as the MovieLens 100K dataset (Fig.~\ref{fig:result_movie_f1_sigma}). 
In contrast, increasing $\Gamma^{(\sigma)}$ led to a reduction in the Gini coefficient (Fig.~\ref{fig:result_r3_gini_sigma}) as well as the MovieLens 100K dataset (Fig.~\ref{fig:result_movie_gini_sigma}). 
This suggests that considering the uncertainty in rating covariance can significantly enhance the diversity of recommendations among items.

These results on the Yahoo! R3 dataset (Fig.~\ref{fig:result_r3}) demonstrate that our robust portfolio optimization method can improve recommendation diversity without reducing recommendation accuracy, compared to not considering the uncertainty in the expectation or covariance of ratings (i.e., $\Gamma^{(\mu)}=\Gamma^{(\sigma)}=0$).
This positive result can be attributed to the fact that the accuracy of rating prediction was lower on the Yahoo! R3 dataset (RMSE: 1.42) than on the MovieLens 100K dataset (RMSE: 0.92), as mentioned in Section~\ref{subsec:setup}.

\subsection{Discussion}\label{subsec:discussions}
Yasumoto and Takano~\cite{yasumotomean} demonstrated that increasing $\alpha$ in Eq.~\eqref{eq:portfolio_obj} improves recommendation diversity.
This trend is similar to our results when increasing $\Gamma^{(\sigma)}$. 
This similarity likely arises because both $\alpha$ and $\Gamma^{(\sigma)}$ encourage the inclusion of item pairs with low covariance in the recommendation list.

The previous studies~\cite{wang2009mean,wang2009portfolio} in information retrieval demonstrated that mean--variance portfolio analysis increases diversity of ranked documents and improves retrieval performance. 
Thus, our robust optimization model is also potentially effective in ranking documents. 
Moreover, our recommendation performance can be further improved by properly tuning $\alpha$. 

Finally, we focus on experimental results of robust optimization models in other domains, such as personalized pricing~\cite{ikeda2024robust} and coupon allocation~\cite{uehara2024robust}.
These studies demonstrated that the model performance and its standard error tended to improve as the degree of considered uncertainty raised. 
A similar trend was observed in our experiments, where the F1 scores and their standard errors improved with increasing $\Gamma^{(\mu)}$ on both real-world datasets.

\section{Conclusion}\label{sec:conclution}
We proposed a robust portfolio optimization model for recommendation that copes with the uncertainty of estimated statistics. 
Specifically, we introduced the cardinality-based uncertainty sets~\cite{bertsimas2004price} of the expectation and covariance of ratings in the mean--variance portfolio optimization model for recommender systems. 
Our robust portfolio optimization model can be reduced to a mixed-integer linear optimization problem, which can be solved exactly using mathematical optimization solvers. 

We conducted numerical experiments using two publicly available rating datasets.
For the MovieLens 100K dataset, our method improved the accuracy or diversity of recommendations by appropriately tuning the cardinality parameters. 
For the Yahoo! R3 dataset, our method improved the recommendation diversity without reducing the recommendation accuracy.
Importantly, our method has the potential to improve the
recommendation quality of various rating prediction algorithms (e.g., collaborative filtering \cite{su2009survey}, matrix factorization \cite{koren2009matrix}, and deep learning techniques~\cite{gao2022graph,zhang2019deep}). 

Solving the mixed-integer optimization problem (Eqs.~\eqref{obj:robust2}--\eqref{con4:robust2}) becomes challenging as the numbers of users and items increase. 
A future direction of study will be to apply high-performance algorithms for solving cardinality-constrained portfolio optimization problems~\cite{chang2000heuristics,kobayashi2023cardinality}. 
Other research directions for our method include combining it with price optimization~\cite{den2015dynamic,ikeda2023prescriptive}, developing privacy preservation mechanisms for it~\cite{himeur2022latest,yanagi2024privacy}, and extending it to multiperiod/dynamic portfolio optimization~\cite{boyd2017multi,takano2023dynamic}. 

%
%
%
%
\bibliographystyle{splncs04}
\bibliography{cite.bib} 

\end{document}